\documentclass[conference,letterpaper]{IEEEtran}
\IEEEoverridecommandlockouts

\usepackage{cite}
\usepackage{amsmath,amssymb,amsfonts,amsthm}
\usepackage{graphicx}
\usepackage{textcomp}
\usepackage{xcolor}
\usepackage{booktabs}
\usepackage{multirow}
\usepackage{array}
\usepackage{balance}
\usepackage[hyphens]{url}
\newtheorem{remark}{Remark}
\usepackage[
    bookmarks=false,
    pdfpagelabels=false,
    hypertexnames=false,
    hidelinks
]{hyperref}

\graphicspath{{Figures/}}

\newtheorem{theorem}{Theorem}
\newtheorem{lemma}{Lemma}
\newtheorem{proposition}{Proposition}

\begin{document}

\title{Rate-Splitting for Service-Aware Multi-Orbit Orchestration in 6G TN--NTN Networks
\thanks{This work is supported by the US--Ireland R\&D Partnership Programme Project ``Resilient Networks'' under Grant RI-SFI-23/US/3924, the EU MSCA Project ``COALESCE'' under Grant 101130739, and Research Ireland Grant 13/RC/2077\_P2.}
}

\author{
\IEEEauthorblockN{
Sayanti Ghosh\textsuperscript{1},
Mustafa Kishk\textsuperscript{2},
Nicola Marchetti\textsuperscript{1}
}
\IEEEauthorblockA{
\textsuperscript{1}Department of Electrical and Electronic Engineering, Trinity College Dublin, Ireland}
\textsuperscript{2} Department of Electronic Engineering, Maynooth University, Ireland
\IEEEauthorblockA{
Email: saghosh@tcd.ie, 
mustafa.kishk@mu.ie,
nicola.marchetti@tcd.ie}
}

\maketitle

\begin{abstract}
Integrated terrestrial--non-terrestrial networks (TN--NTNs) are a key
enabler of sixth-generation (6G) connectivity across terrestrial,
geostationary Earth orbit (GEO), medium Earth orbit (MEO), and low
Earth orbit (LEO) domains. However, existing NTN frameworks do not
jointly address service-aware serving-domain selection, multi-orbit
coordination, and hierarchical multi-stream transmission. This paper
proposes a service-aware multi-orbit rate-splitting multiple access
(SA-MO-RSMA) framework that integrates global-common, orbit-common,
and user-private streams with service-intent-aware serving-domain
selection and resource orchestration. The framework operates as a
system-level orchestration layer over existing 3GPP NTN architectures.
A multi-criteria Orbit--Service Utility Function (OSUF) is introduced,
together with metrics for service continuity, switching stability,
power-normalized transmission efficiency, and synchronization
robustness. Numerical results show that SA-MO-RSMA improves average
per-user service utility, service continuity, switching stability, and
transmission efficiency over the considered benchmark schemes,
demonstrating its potential for coordinated service-aware multi-orbit
TN--NTN operation.
\end{abstract}

\begin{IEEEkeywords}
6G, integrated TN--NTN, multi-orbit orchestration, RSMA, service-aware serving-domain selection.
\end{IEEEkeywords}
\section{Introduction}
\label{sec:introduction}

Sixth-generation (6G) networks are envisioned to integrate terrestrial
and non-terrestrial networks (TN--NTNs) to provide wide-area coverage,
resilient connectivity, and heterogeneous services across terrestrial,
geostationary Earth orbit (GEO), medium Earth orbit (MEO), and low Earth
orbit (LEO) domains~\cite{IMT2030,TR38811,TR38821}. Existing 3GPP NTN
studies address propagation delay, mobility, channel characteristics,
and protocol adaptation~\cite{TR38811,TR38821}, while multi-layer NTN
architectures have been investigated for coordinated operation across
heterogeneous orbital platforms~\cite{Araniti2022NTN,Guidotti2024NTN}.
Recent work has also considered resource allocation in integrated
satellite--terrestrial networks~\cite{Koutsioumpa2025TVT}.

In parallel, rate-splitting multiple access (RSMA) provides flexible
interference management by splitting user messages into common and
private streams~\cite{MaoRSMA2018,ClerckxRSMA2023}. Recent studies have
extended RSMA to GEO--LEO coexisting satellite systems
\cite{Ryu2024TVT}, integrated satellite--terrestrial networks
\cite{Hu2025TVT}, and multibeam LEO satellite networks
\cite{Seong2026TVT}. These works demonstrate the potential of RSMA for
satellite communications, but mainly focus on physical-layer
transmission, precoding, outage, or throughput optimization. The joint
design of service-aware serving-domain selection, hierarchical
multi-orbit RSMA transmission, service continuity, switching stability,
synchronization robustness, and utility-driven resource orchestration
therefore remains insufficiently investigated in integrated TN--NTN
systems.

To address this gap, we propose service-aware multi-orbit rate-splitting multiple access (SA-MO-RSMA), a service-aware multi-orbit
framework integrating hierarchical RSMA transmission,
service-intent-aware serving-domain selection, and utility-driven
resource orchestration. The main contributions are:
\begin{itemize}
\item We develop a hierarchical multi-orbit RSMA model combining
global-common, orbit-common, and user-private streams for coordinated
GEO/MEO/LEO/TN operation.

\item We propose a service-intent-aware serving-domain selection
mechanism accounting for latency, reliability, positioning, multicast,
synchronization, and communication-quality requirements.

\item We introduce the Orbit--Service Utility Function (OSUF),
Multi-Orbit Service Continuity Ratio (MOSCR), Orbit Switching Stability
Index (OSSI), Orbit-Aware RSMA Efficiency (OARE), and synchronization
quality metric $J$ to quantify service utility, continuity, switching
stability, power-normalized transmission efficiency, and synchronization
robustness, respectively.

\item We formulate the joint serving-domain selection and
resource-allocation problem and evaluate SA-MO-RSMA against conventional
RSMA, OMA/NOMA, and benchmark serving-domain selection schemes under
3GPP-aligned NTN settings.
\end{itemize}

SA-MO-RSMA performs service-aware serving-domain selection through a
utility-based orchestration layer and subsequently coordinates
global-common, orbit-common, and user-private RSMA streams. It operates
as a system-level orchestration framework over the existing 3GPP
NR--NTN architecture while preserving the adopted NTN propagation and
radio-access assumptions.
\section{System Model}
\label{sec:system_model}
\begin{figure}[!t]
\centering
\includegraphics[width=0.48\textwidth]{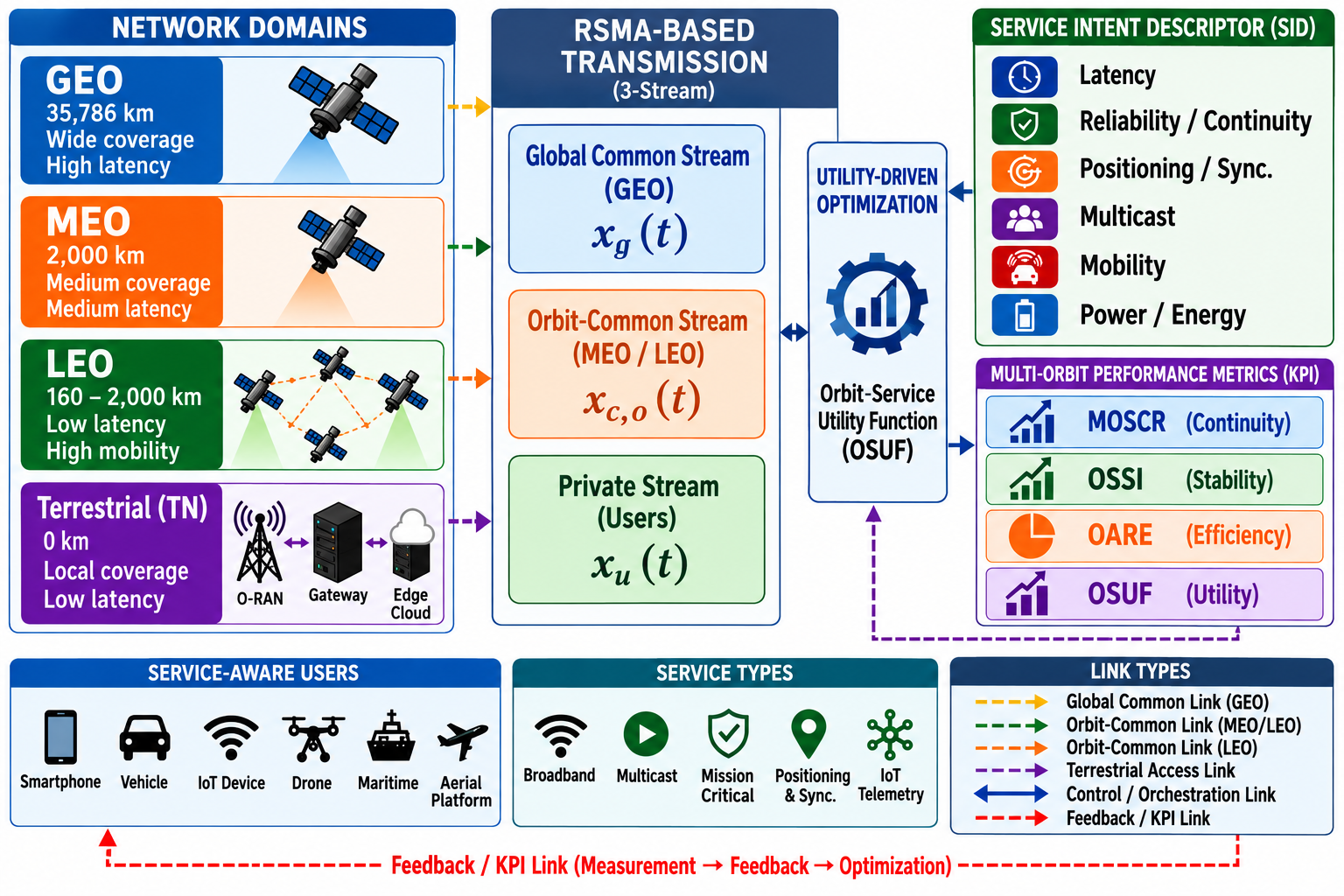}
\caption{SA-MO-RSMA architecture for service-aware serving-domain
selection and hierarchical RSMA transmission across integrated
GEO/MEO/LEO/TN domains.}
\label{fig:sa_mo_rsma_architecture}
\end{figure}

We consider an integrated TN--NTN comprising terrestrial, GEO,
MEO, and LEO domains, as shown
in Fig.~\ref{fig:sa_mo_rsma_architecture}. SA-MO-RSMA operates through
service-intent-aware serving-domain selection, hierarchical RSMA
transmission, and performance-feedback-driven orchestration. Let
$\mathcal{O}=\{\mathcal{O}_G,\mathcal{O}_M,\mathcal{O}_L,\mathcal{O}_T\}$
denote the GEO, MEO, LEO, and terrestrial domains, respectively, and
$\mathcal{U}=\{1,\ldots,U\}$ the user set. Each user is associated with
one serving domain per scheduling interval, with
$o_u\in\mathcal{O}$ denoting the domain serving user $u$.

For user $u$, the Service Intent Descriptor (SID) is
$\mathcal{S}_u=\{L_u,R_u,P_u^{\mathrm{pos}},M_u\}$, where the elements
denote the latency, reliability, positioning-accuracy, and multicast
requirements, respectively. The end-to-end latency under domain $o$ is
$L_{u,o}=d_{u,o}/c+L_{u,o}^{\mathrm{proc}}
+L_{u,o}^{\mathrm{queue}}+L_{u,o}^{\mathrm{sync}}
+L_{u,o}^{\mathrm{orch}}$, where $d_{u,o}$ is the propagation distance,
$c$ is the speed of light, and the remaining terms represent processing,
queueing, synchronization, and orchestration delays.

Following the RSMA downlink principle
\cite{MaoRSMA2018,ClerckxRSMA2023}, the transmitted signal is
$\mathbf{x}(t)=\mathbf{p}_g s_g(t)
+\sum_{o\in\mathcal{O}}\mathbf{p}_{c,o}s_{c,o}(t)
+\sum_{u\in\mathcal{U}}\mathbf{p}_u s_u(t)$,
where $s_g(t)$, $s_{c,o}(t)$, and $s_u(t)$ are the unit-power
global-common, orbit-common, and user-private symbols, and
$\mathbf{p}_g$, $\mathbf{p}_{c,o}$, and $\mathbf{p}_u$ are the
corresponding precoders. Their allocated powers are
$P_g=\|\mathbf{p}_g\|^2$, $P_{c,o}=\|\mathbf{p}_{c,o}\|^2$, and
$P_u^{\mathrm{tx}}=\|\mathbf{p}_u\|^2$, satisfying
$P_g+\sum_{o\in\mathcal{O}}P_{c,o}
+\sum_{u\in\mathcal{U}}P_u^{\mathrm{tx}}\leq P_{\max}$.

Let $\mathbf{h}_{u,o}$ denote the channel vector from domain $o$ to
user $u$, incorporating the adopted NTN propagation effects
\cite{TR38811}. The received signal at user $u$ is
\begin{equation}
\begin{aligned}
y_u(t)=&
\mathbf{h}_{u,o_u}^{H}\mathbf{p}_g s_g(t)
+\sum_{o\in\mathcal{O}}
\mathbf{h}_{u,o}^{H}\mathbf{p}_{c,o}s_{c,o}(t)\\
&+\sum_{v\in\mathcal{U}}
\mathbf{h}_{u,o_v}^{H}\mathbf{p}_{v}s_v(t)+n_u(t),
\end{aligned}
\label{eq:received_signal}
\end{equation}
where $o_v$ is the serving domain of user $v$ and $n_u(t)$ denotes
additive receiver noise. The effective term
$\mathbf{h}_{u,o}^{H}\mathbf{p}_j$ captures channel attenuation and
precoding/beamforming gain; explicit beamformer design is outside the
scope of this system-level orchestration study.

Each user successively decodes the global-common stream, the
orbit-common stream associated with its serving domain, and its
user-private stream. At each stage, undecoded streams are treated as
interference, while successfully decoded common streams are removed
through successive interference cancellation (SIC).

To enable service-aware serving-domain selection, the utility of
assigning user $u$ to candidate domain $o$ is defined as
\begin{equation}
\Psi_{u,o}
=w_1\hat{\Gamma}_{u,o}+w_2\hat{C}_{u,o}
+w_3\hat{P}_{u,o}+w_4\hat{M}_{u,o}
+w_5\hat{J}_{u,o}-w_6\hat{L}_{u,o},
\label{eq:orbit_utility}
\end{equation}
where $\hat{\Gamma}_{u,o}$, $\hat{C}_{u,o}$, $\hat{P}_{u,o}$,
$\hat{M}_{u,o}$, $\hat{J}_{u,o}$, and $\hat{L}_{u,o}$ denote normalized
communication quality, service continuity, positioning capability,
multicast efficiency, synchronization stability, and latency cost,
respectively. Each component is normalized to $[0,1]$ as
$\hat{X}=(X-X_{\min})/(X_{\max}-X_{\min})$ over the candidate domains,
with $w_i\geq0$ and $\sum_{i=1}^{6}w_i=1$.

The serving domain is selected as
\begin{equation}
o_u^\star=\arg\max_{o\in\mathcal{O}}\Psi_{u,o},
\label{eq:orbit_selection}
\end{equation}
which determines the subsequent RSMA stream association, scheduling,
and domain-dependent transmission configuration.

\begin{lemma}
If all normalized components in eq.~\eqref{eq:orbit_utility} lie in
$[0,1]$, with $w_i\geq0$ and $\sum_{i=1}^{6}w_i=1$, then
$\Psi_{u,o}$ is bounded.
\end{lemma}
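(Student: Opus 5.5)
The plan is a direct termwise bound on the affine combination in eq.~\eqref{eq:orbit_utility}, splitting it into the five positively weighted terms and the single negatively weighted latency term. Since each normalized component lies in $[0,1]$ and each $w_i\geq0$, every product satisfies $0\leq w_i\hat{X}\leq w_i$. We therefore obtain
\begin{equation*}
-w_6\leq \Psi_{u,o}\leq w_1+w_2+w_3+w_4+w_5 .
\end{equation*}
The lower bound comes from setting all beneficial components to $0$ and $\hat{L}_{u,o}=1$. The upper bound comes from setting them to $1$ and $\hat{L}_{u,o}=0$.

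Next, I will use $\sum_{i=1}^{6}w_i=1$ to rewrite the upper bound as $1-w_6$. This gives the explicit interval $\Psi_{u,o}\in[-w_6,\,1-w_6]\subseteq[-1,1]$, where the last inclusion uses $0\leq w_6\leq1$, which follows from nonnegativity and the sum constraint. The interval has length exactly $1$ and is uniform over all users $u$ and candidate domains $o$. As a consequence, the maximization in eq.~\eqref{eq:orbit_selection} is taken over a finite set of bounded values and is well defined.

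One point needs care. The min--max normalization $\hat{X}=(X-X_{\min})/(X_{\max}-X_{\min})$ is undefined when $X_{\max}=X_{\min}$ over the candidate domains. The lemma already assumes the normalized components lie in $[0,1]$, so this case is outside its scope. I will add a remark that the usual convention of assigning $\hat{X}=0$ or another constant in $[0,1]$ preserves the hypothesis. No other step poses a real obstacle, since the argument is just monotonicity of a nonnegatively weighted sum.
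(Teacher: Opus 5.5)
Your proposal is correct and follows essentially the same route as the paper: both bound the five benefit terms in $[0,1-w_6]$ and the latency term in $[-w_6,0]$, giving $-w_6\leq\Psi_{u,o}\leq 1-w_6$. Your extra remarks on tightness and on the degenerate case $X_{\max}=X_{\min}$ are fine, but the lemma as stated does not need them.
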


\begin{proposition}
For every $u\in\mathcal{U}$, the selection rule in
eq.~\eqref{eq:orbit_selection} admits at least one maximizer over the
finite set $\mathcal{O}$.
\end{proposition}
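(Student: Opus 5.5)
The plan is to reduce the proposition to the elementary fact that a real-valued function on a finite nonempty set attains its maximum. I will use the preceding lemma only to guarantee that each $\Psi_{u,o}$ is a well-defined finite real number, not something that needs to be bounded uniformly.

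The first step is to confirm that the utility is well defined. By the preceding lemma, since every normalized component lies in $[0,1]$ and the weights are nonnegative and sum to one, each value $\Psi_{u,o}$ is finite and in fact lies in $[-1,1]$. Hence, for fixed $u$, the map $o\mapsto\Psi_{u,o}$ is a real-valued function on $\mathcal{O}$.

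There is one subtle point. The normalization $\hat{X}=(X-X_{\min})/(X_{\max}-X_{\min})$ is undefined when all candidate domains share the same value of $X$. I would adopt the convention $\hat{X}=0$ (or any fixed constant in $[0,1]$) in that degenerate case, so that the lemma's hypothesis holds and $\Psi_{u,o}$ is always defined. I expect this to be the only real obstacle, and it is a matter of convention rather than an analytic difficulty.

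The second step uses finiteness. The set $\mathcal{O}=\{\mathcal{O}_G,\mathcal{O}_M,\mathcal{O}_L,\mathcal{O}_T\}$ is nonempty and has four elements, so the set of values $\{\Psi_{u,o}:o\in\mathcal{O}\}$ is a finite nonempty subset of $\mathbb{R}$ and therefore has a largest element $\Psi^\star_u$. I would show this either by induction on the cardinality, using $\max\{a,b\}$ pairwise, or directly from the total order on $\mathbb{R}$. Any $o$ with $\Psi_{u,o}=\Psi^\star_u$ is then a maximizer in \eqref{eq:orbit_selection}.

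To finish, I would note that the maximizer need not be unique. If uniqueness is needed for the subsequent stream association, a fixed tie-breaking order on $\mathcal{O}$, for instance preferring the lowest index, selects a single $o_u^\star$. Since $u$ was arbitrary, the claim holds for every $u\in\mathcal{U}$.
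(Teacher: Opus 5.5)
Your proposal is correct and follows the paper's proof: both reduce the claim to the fact that a real-valued function on the finite nonempty set $\mathcal{O}$ attains its maximum, and both note that the maximizer need not be unique. Your added convention for the degenerate normalization case $X_{\max}=X_{\min}$ and your tie-breaking rule are sensible refinements that the paper leaves implicit.
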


Proofs are provided in the Appendix.
\section{Performance Metrics and Optimization Framework}
\label{sec:performance}

To evaluate SA-MO-RSMA, we define the OSUF, which extends eq.~\eqref{eq:orbit_utility} to jointly capture
normalized throughput, service continuity, positioning reliability,
multicast efficiency, synchronization stability, latency cost, and
orchestration overhead:
\begin{equation}
\begin{aligned}
U_{u,o}^{\mathrm{OSUF}}
={}&w_T\hat{T}_{u,o}+w_C\hat{C}_{u,o}
+w_P\hat{P}_{u,o}+w_M\hat{M}_{u,o}\\
&+w_J\hat{J}_{u,o}-w_L\hat{L}_{u,o}
-w_O\hat{O}_{u,o},
\end{aligned}
\label{eq:osuf}
\end{equation}
All components lie in $[0,1]$, with non-negative weights satisfying
$w_T+w_C+w_P+w_M+w_J+w_L+w_O=1$. The network utility is
$U_{\mathrm{net}}^{\mathrm{OSUF}}
=\sum_{u=1}^{U}U_{u,o_u}^{\mathrm{OSUF}}$.

Service continuity is quantified by
$\mathrm{MOSCR}_u=T_{u,\mathrm{connected}}/T_{u,\mathrm{total}}$
and $\mathrm{MOSCR}=U^{-1}\sum_{u=1}^{U}\mathrm{MOSCR}_u$, where
$T_{u,\mathrm{connected}}$ and $T_{u,\mathrm{total}}$ denote the
connected-service duration and total observation time, respectively.

To characterize serving-domain switching stability, we define the
OSSI as
\begin{equation}
\mathrm{OSSI}=
\begin{cases}
1-\dfrac{N_{\mathrm{unstable}}}{N_{\mathrm{switch}}},
& N_{\mathrm{switch}}>0,\\[1mm]
1, & N_{\mathrm{switch}}=0.
\end{cases}
\label{eq:ossi}
\end{equation}
Here, $N_{\mathrm{unstable}}$ counts ping-pong transitions
$o_a\!\rightarrow\!o_b\!\rightarrow\!o_a$ occurring within the
predefined observation window $T_{\mathrm{pp}}$, and
$N_{\mathrm{switch}}$ is the total number of serving-domain switches.
The second case assigns maximum stability when no switching occurs.

To quantify the transmit-power efficiency of hierarchical RSMA, the
OARE is defined as
\begin{equation}
\mathrm{OARE}=
\frac{R_g+\sum_{o\in\mathcal O}R_{c,o}
+\sum_{u\in\mathcal U}R_{u,p}}
{P_g+\sum_{o\in\mathcal O}P_{c,o}
+\sum_{u\in\mathcal U}P_u^{\mathrm{tx}}},
\label{eq:oare}
\end{equation}
where $R_g$, $R_{c,o}$, and $R_{u,p}$ denote the achievable
global-common, orbit-common, and user-private stream rates,
respectively.
Synchronization robustness is characterized by
$\Delta_o=\sqrt{(\sigma_{\tau,o}/\tau_{\rm ref})^2+
(\sigma_{f,o}/f_{\rm ref})^2}$,
$J_o=(1+\Delta_o)^{-1}$, and
$J=|\mathcal O|^{-1}\sum_{o\in\mathcal O}J_o$, where
$\sigma_{\tau,o}$ and $\sigma_{f,o}$ denote the timing- and
frequency-error standard deviations of domain $o$, respectively, and
$\tau_{\rm ref}$ and $f_{\rm ref}$ are the corresponding normalization
reference levels. Thus, $\Delta_o$ is a dimensionless synchronization
impairment, $J_o\in(0,1]$ denotes the synchronization quality of domain
$o$, and $J$ represents the average synchronization quality across the
candidate domains. The normalized domain-dependent quantity
$\hat{J}_{u,o}$ is used in the OSUF in eq.~\eqref{eq:osuf}.

\begin{lemma}
If all normalized OSUF components lie in $[0,1]$, with $w_k\geq0$
and $\sum_k w_k=1$, then $U_{u,o}^{\mathrm{OSUF}}$ is bounded.
\end{lemma}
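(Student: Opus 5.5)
The plan is to bound $U_{u,o}^{\mathrm{OSUF}}$ term by term, exactly as for $\Psi_{u,o}$ in the first Lemma. We separate the seven terms in eq.~\eqref{eq:osuf} into the five \emph{beneficial} terms with a plus sign, weighted by $w_T,w_C,w_P,w_M,w_J$, and the two \emph{cost} terms with a minus sign, weighted by $w_L,w_O$.

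\textbf{Upper bound.} Since every normalized component lies in $[0,1]$ and every weight is nonnegative, each beneficial term satisfies $0\le w_X\hat{X}_{u,o}\le w_X$. Each cost term satisfies $-w_Y\le -w_Y\hat{Y}_{u,o}\le 0$. Dropping the nonpositive cost terms gives
\begin{equation*}
U_{u,o}^{\mathrm{OSUF}}\le w_T+w_C+w_P+w_M+w_J \le 1.
\end{equation*}
The last step uses the normalization $\sum_k w_k=1$ together with $w_L,w_O\ge 0$.

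\textbf{Lower bound.} Symmetrically, we drop the nonnegative beneficial terms and use the cost bounds to get
\begin{equation*}
U_{u,o}^{\mathrm{OSUF}}\ge -(w_L+w_O)\ge -1.
\end{equation*}

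Combining the two bounds yields $-1\le U_{u,o}^{\mathrm{OSUF}}\le 1$ for every $u$ and every $o$. In particular, the network utility $U_{\mathrm{net}}^{\mathrm{OSUF}}$ lies in $[-U,U]$.

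There is no real obstacle here; the argument is a direct convex-combination bound. The only point needing care is the degenerate normalization $X_{\max}=X_{\min}$. In that case the min--max formula for $\hat X$ is undefined, so we rely on the hypothesis of the statement that each component already lies in $[0,1]$. Under the usual convention, such a component is set to a constant in $[0,1]$, for example $0$. With that convention in place, the bounds above apply verbatim.
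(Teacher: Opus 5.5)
Your proof is correct and is essentially the paper's argument. The paper also splits OSUF into a benefit part in $[0,1-w_L-w_O]$ and a cost part in $[-(w_L+w_O),0]$, giving $-(w_L+w_O)\le U_{u,o}^{\mathrm{OSUF}}\le 1-w_L-w_O$; these are exactly your intermediate bounds before you relax them to $[-1,1]$.
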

The proof is provided in the Appendix.

\begin{remark}
Since $0\leq N_{\mathrm{unstable}}\leq N_{\mathrm{switch}}$ for
$N_{\mathrm{switch}}>0$, eq.~\eqref{eq:ossi} guarantees
$0\leq\mathrm{OSSI}\leq1$; when $N_{\mathrm{switch}}=0$,
$\mathrm{OSSI}=1$. Moreover, the synchronization metric satisfies
$0<J\leq1$.
\end{remark}
\begin{proposition}
For fixed $\hat{T}_{u,o}$, $\hat{C}_{u,o}$, $\hat{P}_{u,o}$,
$\hat{M}_{u,o}$, and $\hat{J}_{u,o}$,
$U_{u,o}^{\mathrm{OSUF}}$ is non-increasing in
$\hat{L}_{u,o}$ and $\hat{O}_{u,o}$.
\end{proposition}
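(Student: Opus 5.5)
The plan is to treat $U_{u,o}^{\mathrm{OSUF}}$ in eq.~\eqref{eq:osuf} as an affine function of the two varying arguments once the other five normalized components are held fixed. With $\hat{T}_{u,o}$, $\hat{C}_{u,o}$, $\hat{P}_{u,o}$, $\hat{M}_{u,o}$, and $\hat{J}_{u,o}$ fixed, we collect them into the constant
\begin{equation*}
K_{u,o}=w_T\hat{T}_{u,o}+w_C\hat{C}_{u,o}+w_P\hat{P}_{u,o}+w_M\hat{M}_{u,o}+w_J\hat{J}_{u,o}.
\end{equation*}
The utility then reads $U_{u,o}^{\mathrm{OSUF}}=K_{u,o}-w_L\hat{L}_{u,o}-w_O\hat{O}_{u,o}$. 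This reduces the claim to a statement about a linear function of $(\hat{L}_{u,o},\hat{O}_{u,o})$ on $[0,1]^2$.

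Next we compare two admissible values directly; differentiability is not needed. Take $\hat{L}_{u,o}\le\hat{L}'_{u,o}$ and hold $\hat{O}_{u,o}$ fixed. The difference of the corresponding utilities is
\begin{equation*}
U'-U=-w_L\bigl(\hat{L}'_{u,o}-\hat{L}_{u,o}\bigr)\le 0,
\end{equation*}
because $w_L\ge0$ by the standing assumption on the OSUF weights. The same argument with $w_O\ge0$ handles $\hat{O}_{u,o}$. Changing both arguments together gives a sum of two non-positive terms, so the utility is also non-increasing coordinatewise jointly.

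The only point to be careful about is that we can claim just \emph{non}-increasing, not strictly decreasing. If $w_L=0$ or $w_O=0$, the utility is constant in that argument, and monotonicity holds only weakly. No step here is a real obstacle. The one implicit assumption to state is that fixing the other normalized components is legitimate, i.e., we regard the $\hat{\cdot}$ quantities as independent arguments of the map in eq.~\eqref{eq:osuf}, even though the min--max normalization could couple them across domains in practice. Consistent with the boundedness lemma, all arguments are taken in $[0,1]$.
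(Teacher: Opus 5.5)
Your proof is correct and follows the paper's argument. The paper notes that $\partial U_{u,o}^{\mathrm{OSUF}}/\partial\hat{L}_{u,o}=-w_L\le 0$ and $\partial U_{u,o}^{\mathrm{OSUF}}/\partial\hat{O}_{u,o}=-w_O\le 0$; your comparison $U'-U=-w_L(\hat{L}'_{u,o}-\hat{L}_{u,o})\le 0$ states the same fact without derivatives, which for an affine map is equally valid and slightly more direct.
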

The proof is provided in the Appendix.

Using these metrics, the joint serving-domain selection and
resource-allocation problem is formulated as
\begin{align}
\max_{\mathbf P,\mathbf A,\mathbf\Theta}\quad&
U_{\mathrm{net}}^{\mathrm{OSUF}}
\label{eq:opt_obj}\\[-1mm]
\textnormal{s.t.}\quad&
\mathrm{MOSCR}\geq C_{\mathrm{th}},\quad
\mathrm{OSSI}\geq S_{\mathrm{th}},
\nonumber\\[-1mm]
&
\mathrm{OARE}\geq E_{\mathrm{th}},\quad
J\geq J_{\mathrm{th}},
\label{eq:const_metrics}\\[-1mm]
&
P_g+\sum_{o\in\mathcal O}P_{c,o}
+\sum_{u=1}^{U}P_u^{\mathrm{tx}}\leq P_{\max},
\label{eq:const_power}\\[-1mm]
&
o_u\in\mathcal O,\quad
P_g,P_{c,o},P_u^{\mathrm{tx}}\geq0.
\label{eq:const_nonnegative}
\end{align}
Here, $\mathbf P$ collects the stream-power variables,
$\mathbf A=\{o_u\}_{u\in\mathcal U}$ denotes the serving-domain
associations, and $\mathbf\Theta$ collects the remaining RSMA,
scheduling, and orchestration variables. The thresholds
$C_{\mathrm{th}}$, $S_{\mathrm{th}}$, $E_{\mathrm{th}}$, and
$J_{\mathrm{th}}$ specify the minimum service-continuity,
switching-stability, transmission-efficiency, and synchronization-quality
requirements, respectively. The problem is mixed discrete--continuous
and generally non-convex due to coupling between serving-domain
associations and continuous resource variables.

We employ an alternating service-aware optimization procedure.
Equation~\eqref{eq:orbit_selection} initializes $\mathbf A$. For fixed
$\mathbf A$, $(\mathbf P,\mathbf\Theta)$ are updated to increase
$U_{\mathrm{net}}^{\mathrm{OSUF}}$ subject to
eqs.~\eqref{eq:const_metrics}--\eqref{eq:const_nonnegative}, followed
by an update of $\mathbf A$ using the resulting utilities. The procedure
terminates when $\mathbf A$ remains unchanged or the OSUF improvement
falls below the convergence tolerance $\epsilon_{\mathrm{opt}}>0$.
Since the problem is non-convex, the procedure yields a feasible
locally optimized solution without claiming global optimality. Over
$I_{\max}$ iterations, the association update requires at most
$I_{\max}U|\mathcal O|$ utility evaluations, i.e., $4I_{\max}U$ for
$|\mathcal O|=4$; the overall complexity additionally depends on the
solver used for the continuous resource-update step.

\begin{theorem}
If the feasible set in
eqs.~\eqref{eq:const_metrics}--\eqref{eq:const_nonnegative} is
nonempty and compact, and $U_{\mathrm{net}}^{\mathrm{OSUF}}$ is
continuous in the continuous variables for each fixed association,
then eq.~\eqref{eq:opt_obj} admits at least one global maximizer.
\end{theorem}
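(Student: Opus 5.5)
The plan is to decompose the problem by association and apply the Weierstrass extreme value theorem to each fixed association, then take the best of finitely many maxima. The association vector $\mathbf A=\{o_u\}_{u\in\mathcal U}$ ranges over the finite set $\mathcal O^{U}$, which has $|\mathcal O|^U=4^U$ elements. For each fixed $\mathbf A\in\mathcal O^U$, let $\mathcal F_{\mathbf A}$ denote the set of continuous variables $(\mathbf P,\mathbf\Theta)$ such that $(\mathbf P,\mathbf A,\mathbf\Theta)$ satisfies eqs.~\eqref{eq:const_metrics}--\eqref{eq:const_nonnegative}. The full feasible set is then the finite disjoint union $\bigcup_{\mathbf A}\{\mathbf A\}\times\mathcal F_{\mathbf A}$.

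First, I will argue that each slice $\mathcal F_{\mathbf A}$ is compact. I read the hypothesis ``nonempty and compact'' as compactness of the feasible set in the product of the discrete space $\mathcal O^U$ with the Euclidean space of continuous variables. A slice $\{\mathbf A\}\times\mathcal F_{\mathbf A}$ is the intersection of this compact set with the closed set $\{\mathbf A\}\times\mathbb R^{n}$, so it is compact. Equivalently, $\mathcal F_{\mathbf A}$ is a closed subset of a compact set. If one prefers to read compactness slice-wise, this step is immediate.

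Next, let $\mathcal A^\star=\{\mathbf A:\mathcal F_{\mathbf A}\neq\emptyset\}$. This set is nonempty because the feasible set is nonempty. For each $\mathbf A\in\mathcal A^\star$, $U_{\mathrm{net}}^{\mathrm{OSUF}}(\cdot,\mathbf A,\cdot)$ is continuous on the nonempty compact set $\mathcal F_{\mathbf A}$ by hypothesis. The Weierstrass theorem therefore gives a maximizer $(\mathbf P^\star_{\mathbf A},\mathbf\Theta^\star_{\mathbf A})$ attaining the value $V_{\mathbf A}$. This value is finite, consistent with the boundedness lemma for the OSUF, which bounds the network utility between $-U$ and $U$.

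Finally, since $\mathcal A^\star$ is finite and nonempty, $\max_{\mathbf A\in\mathcal A^\star}V_{\mathbf A}$ is attained at some $\mathbf A^\star$, exactly as in the Proposition on the existence of a maximizer of eq.~\eqref{eq:orbit_selection} over the finite set $\mathcal O$. The point $(\mathbf P^\star_{\mathbf A^\star},\mathbf A^\star,\mathbf\Theta^\star_{\mathbf A^\star})$ is feasible. Any other feasible point $(\mathbf P,\mathbf A,\mathbf\Theta)$ has objective value at most $V_{\mathbf A}\le V_{\mathbf A^\star}$, so this point is a global maximizer of eq.~\eqref{eq:opt_obj}.

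The main obstacle is making the first step precise. The metric constraints (MOSCR, OSSI, OARE, $J$) are not obviously closed in the continuous variables, because OSSI is piecewise and OARE is a ratio. The argument should therefore rely entirely on the stated compactness hypothesis and not try to derive closedness from the constraint functions. The only point needing care is to state explicitly the topology in which compactness is assumed, namely the discrete topology on associations times the Euclidean topology on continuous variables, so that the slices inherit compactness.
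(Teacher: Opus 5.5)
Your proposal is correct and follows essentially the same argument as the paper. Both proofs use the finiteness of $\mathcal O^U$, apply Weierstrass to each nonempty compact slice of continuous variables, and then take the best of finitely many maxima. Your explicit justification that each slice $\mathcal F_{\mathbf A}$ inherits compactness from the full feasible set, as a closed subset under the discrete-times-Euclidean topology, fills in a step the paper simply asserts.
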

The proof is provided in the Appendix.

For load-dependent analysis, define the average per-user OSUF as
$\bar{U}_{\mathrm{net}}^{\mathrm{OSUF}}
=U^{-1}\sum_{u=1}^{U}U_{u,o_u}^{\mathrm{OSUF}}$.
\begin{theorem}
Under fixed total resources and channel conditions, if increasing user
load does not increase the per-user benefit components and does not
decrease the corresponding cost components, then
$\bar{U}_{\mathrm{net}}^{\mathrm{OSUF}}$ is non-increasing with user
load.
\end{theorem}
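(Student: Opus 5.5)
The argument rests on the fact that, by eq.~\eqref{eq:osuf}, each $U_{u,o}^{\mathrm{OSUF}}$ is an affine function of its normalized components. It is non-decreasing in the benefit components $\hat{T},\hat{C},\hat{P},\hat{M},\hat{J}$, because $w_T,w_C,w_P,w_M,w_J\geq0$. It is non-increasing in the cost components $\hat{L},\hat{O}$, which is exactly the earlier Proposition on monotonicity in $\hat{L}_{u,o}$ and $\hat{O}_{u,o}$, and the same one-line argument covers the benefit terms. The plan is to compare two loads $U_1<U_2$ under fixed total resources and channel conditions and show $\bar{U}_{\mathrm{net}}^{\mathrm{OSUF}}(U_2)\leq\bar{U}_{\mathrm{net}}^{\mathrm{OSUF}}(U_1)$.

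First I will make the hypothesis precise. I interpret ``does not increase the per-user benefit components and does not decrease the cost components'' as a componentwise statement on the load-indexed family of per-user component vectors. The hypothesis only constrains these components through their averages over users. So the component I work with is the per-user average
\[
\bar{X}(U)=U^{-1}\sum_{u=1}^{U}\hat{X}_{u,o_u}(U),\qquad X\in\{T,C,P,M,J,L,O\}.
\]
The requirement is then
\[
\bar{X}(U_2)\leq\bar{X}(U_1)\ \text{for benefits},\qquad \bar{X}(U_2)\geq\bar{X}(U_1)\ \text{for costs}.
\]
Next, linearity of the average together with the fixed weights gives
\[
\bar{U}_{\mathrm{net}}^{\mathrm{OSUF}}(U)=w_T\bar{T}+w_C\bar{C}+w_P\bar{P}+w_M\bar{M}+w_J\bar{J}-w_L\bar{L}-w_O\bar{O}.
\]
I then take the difference $\bar{U}(U_2)-\bar{U}(U_1)$ term by term. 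Each benefit term contributes $w_k(\bar{X}(U_2)-\bar{X}(U_1))\leq0$, and each cost term contributes $-w_k(\bar{X}(U_2)-\bar{X}(U_1))\leq0$. Summing gives a non-positive difference, which is the claim.

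The main obstacle is the modeling step of interpreting the hypothesis, not the algebra. The normalization $\hat{X}=(X-X_{\min})/(X_{\max}-X_{\min})$ is taken over candidate domains, so the normalized components at different loads must be computed with respect to a common fixed normalization range. Otherwise the monotonicity assumed for the raw quantities need not transfer to the normalized ones. I will state this explicitly: ``fixed total resources and channel conditions'' fixes $X_{\min},X_{\max}$. Since $x\mapsto(x-X_{\min})/(X_{\max}-X_{\min})$ is increasing, monotonicity of the raw components then carries over to the normalized ones. A second subtlety is that the association $o_u$ may itself change with load. The hypothesis is imposed on the realized components $\hat{X}_{u,o_u}$ under the operating associations, so no separate argument about re-optimization is needed. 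Finally, Lemma~2 (boundedness of $U_{u,o}^{\mathrm{OSUF}}$) guarantees that all averages are finite, so the term-by-term comparison is well defined.
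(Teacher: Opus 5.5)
Your proof is correct and follows the paper's argument: compare loads $U_1<U_2$, then use that the benefit weights are non-negative while $\hat{L}$ and $\hat{O}$ enter with a minus sign, which gives $\bar{U}_{\mathrm{net}}^{\mathrm{OSUF}}(U_2)\leq\bar{U}_{\mathrm{net}}^{\mathrm{OSUF}}(U_1)$. Stating the hypothesis on the user-averaged normalized components $\bar{X}(U)$ makes precise what the paper's one-line proof leaves implicit, since the user sets at the two loads differ; and because you impose it directly on the normalized $\hat{X}_{u,o_u}$, your side remark that fixed resources and channels pin down $X_{\min},X_{\max}$ is unnecessary (and not implied by the assumptions, since the min--max range over candidate domains generally moves with load).
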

The proof is provided in the Appendix.

\section{Results and Discussion}
\label{sec:results}
This section evaluates the proposed SA-MO-RSMA framework against
conventional RSMA, OMA/NOMA, and serving-domain selection baselines.
OSUF, MOSCR, OSSI~\eqref{eq:ossi}, and OARE~\eqref{eq:oare} evaluate
service utility, service continuity, switching stability, and
transmission efficiency, respectively. Simulations adopt 3GPP-aligned
NTN parameters~\cite{TR38811,TR38821} and IMT-2030
considerations~\cite{IMT2030}, with orbital altitudes and Ka-band
propagation settings based on representative NTN assumptions
\cite{TR38811,Araniti2022NTN}. Unless otherwise specified, 25\% of the
total transmit power is allocated to the global-common stream, while
the remaining 75\% is allocated between the orbit-common and
user-private streams through the resource-allocation procedure. The
serving-domain association is initialized using
eq.~\eqref{eq:orbit_selection} and the utility in
eq.~\eqref{eq:orbit_utility}, while the joint optimization enforces
service-continuity, switching-stability, transmission-efficiency, and
synchronization constraints. All reported feasible solutions satisfy
$J\geq J_{\rm th}$; hence, $J$ is treated as a feasibility requirement
rather than as a separately swept performance metric. The simulation
parameters are summarized in Table~\ref{tab:sim_parameters}.

\begin{table}[t]
\caption{Simulation Parameters}
\label{tab:sim_parameters}
\centering
\footnotesize
\renewcommand{\arraystretch}{0.95}
\setlength{\tabcolsep}{2.2pt}
\begin{tabular}{lll}
\toprule
\textbf{Parameter} & \textbf{Description} & \textbf{Value} \\
\midrule
$f_c$                & Carrier frequency                    & 20 GHz \\
$B$                  & System bandwidth                     & 400 MHz \\
$P_{\max}$           & Total transmit-power budget          & 40 dBm \\
$N_0$                & Noise power spectral density         & $-174$ dBm/Hz \\
$U$                  & Number of users                      & 10--80 \\
$v_u$                & User speed                           & 20--160 km/h \\
$h_{\rm LEO}$        & LEO altitude                         & 600 km \\
$h_{\rm MEO}$        & MEO altitude                         & 8,000 km \\
$h_{\rm GEO}$        & GEO altitude                         & 35,786 km \\
$\sigma_{\rm sh}$    & Shadow-fading std. dev. (urban LOS)  & 4 dB \\
$\alpha_g$           & Global-common power fraction         & 0.25 \\
$1-\alpha_g$         & Remaining RSMA power fraction        & 0.75 \\
$\mathbf{w}$         & OSUF weight vector                   & $(1/7,\ldots,1/7)$ \\
$C_{\rm th}$         & MOSCR threshold                      & 0.8 \\
$S_{\rm th}$         & OSSI threshold                       & 0.8 \\
$E_{\rm th}$         & OARE threshold                       & 0.5 \\
$J_{\rm th}$         & Synchronization threshold            & 0.8 \\
$T_{\rm pp}$         & Ping-pong observation window         & 1 s \\
$\tau_{\rm ref}$     & Timing-error reference               & 1 $\mu$s \\
$f_{\rm ref}$        & Frequency-error reference            & 1 kHz \\
$\epsilon_{\rm opt}$ & Optimization tolerance               & $10^{-3}$ \\
$I_{\max}$           & Maximum optimization iterations      & 50 \\
$N_{\rm MC}$         & Monte Carlo realizations             & $10^{4}$ \\
Channel model        & 3GPP NTN propagation model           & TR 38.811 \\
\bottomrule
\end{tabular}
\end{table}
Equal OSUF weights are used as a service-neutral reference setting;
service-specific weighting is left to application-dependent
orchestration policies.
For mobility-aware switching, a candidate domain is activated only when
its SINR exceeds that of the current domain by the prescribed
hysteresis margin; otherwise, the current association is retained.
For comparison, the max-SINR and latency-only baselines select the
domain with the highest instantaneous SINR and minimum latency,
respectively; fixed-domain selection retains the initial association,
while the no-hysteresis baseline switches without the hysteresis
condition. All schemes use identical channel, mobility, and power
settings.
\begin{figure}[t]
\centering
\includegraphics[width=0.82\columnwidth]{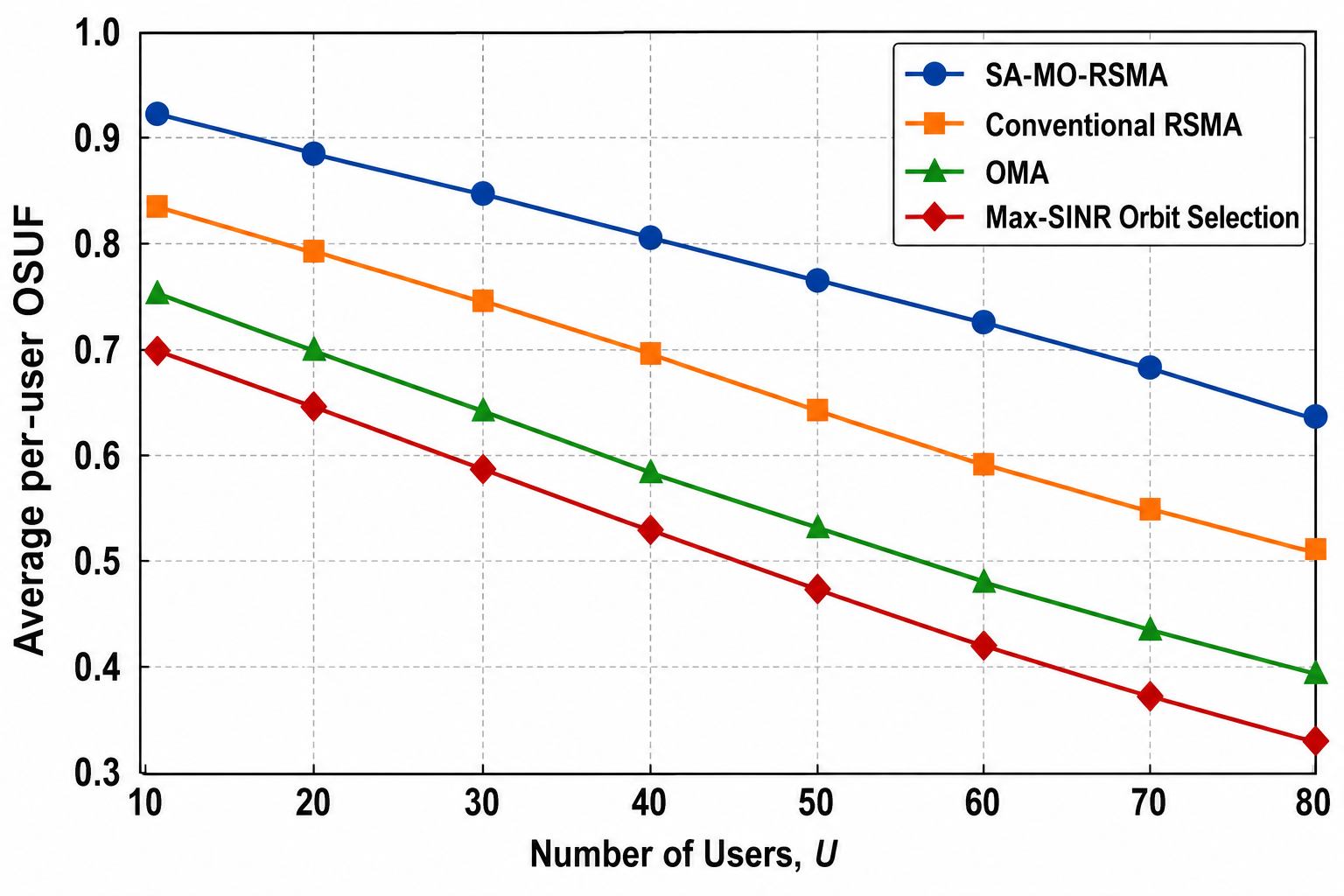}
\caption{Average per-user OSUF versus number of users.}
\label{fig:osuf_users}
\vspace{-3mm}
\end{figure}

Fig.~\ref{fig:osuf_users} shows the average per-user utility
$\bar{U}_{\mathrm{net}}^{\mathrm{OSUF}}$ versus the number of users.
The utility decreases with increasing user load over the considered
range, reflecting increased interference, resource contention, and
orchestration cost. SA-MO-RSMA achieves the highest utility across all
considered user loads. The gain over conventional RSMA reflects the
benefit of the proposed hierarchical stream structure, whereas OMA and
max-SINR selection are limited by orthogonal resource allocation and
purely channel-driven serving-domain decisions, respectively.

\begin{figure}[t]
\centering
\includegraphics[width=0.82\columnwidth]{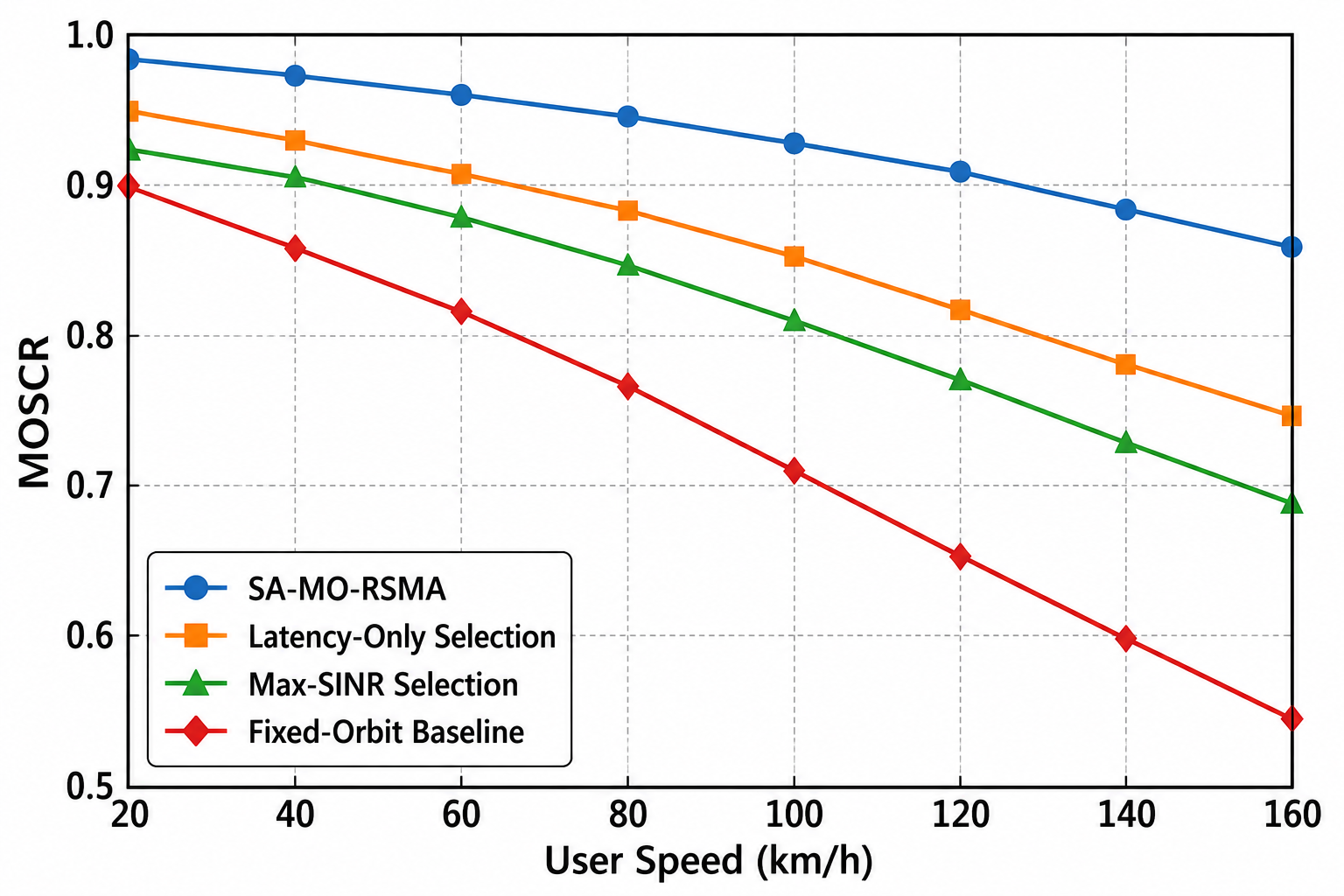}
\caption{MOSCR versus user speed.}
\label{fig:moscr_speed}
\vspace{-3mm}
\end{figure}

Fig.~\ref{fig:moscr_speed} shows MOSCR versus user speed. As user
mobility increases, MOSCR decreases for all schemes because faster
channel variations and more frequent serving-domain transitions reduce
uninterrupted connected time. SA-MO-RSMA achieves the highest MOSCR
across the considered speed range, demonstrating improved service
continuity under mobility. This gain results from its service-aware
serving-domain selection mechanism, which jointly accounts for
communication quality, service continuity, and orchestration
requirements. In contrast, latency-only and max-SINR baselines rely on
more limited selection criteria, while fixed-domain selection lacks the
adaptability required under changing mobility conditions.

\begin{figure}[t]
\centering
\includegraphics[width=0.82\columnwidth]{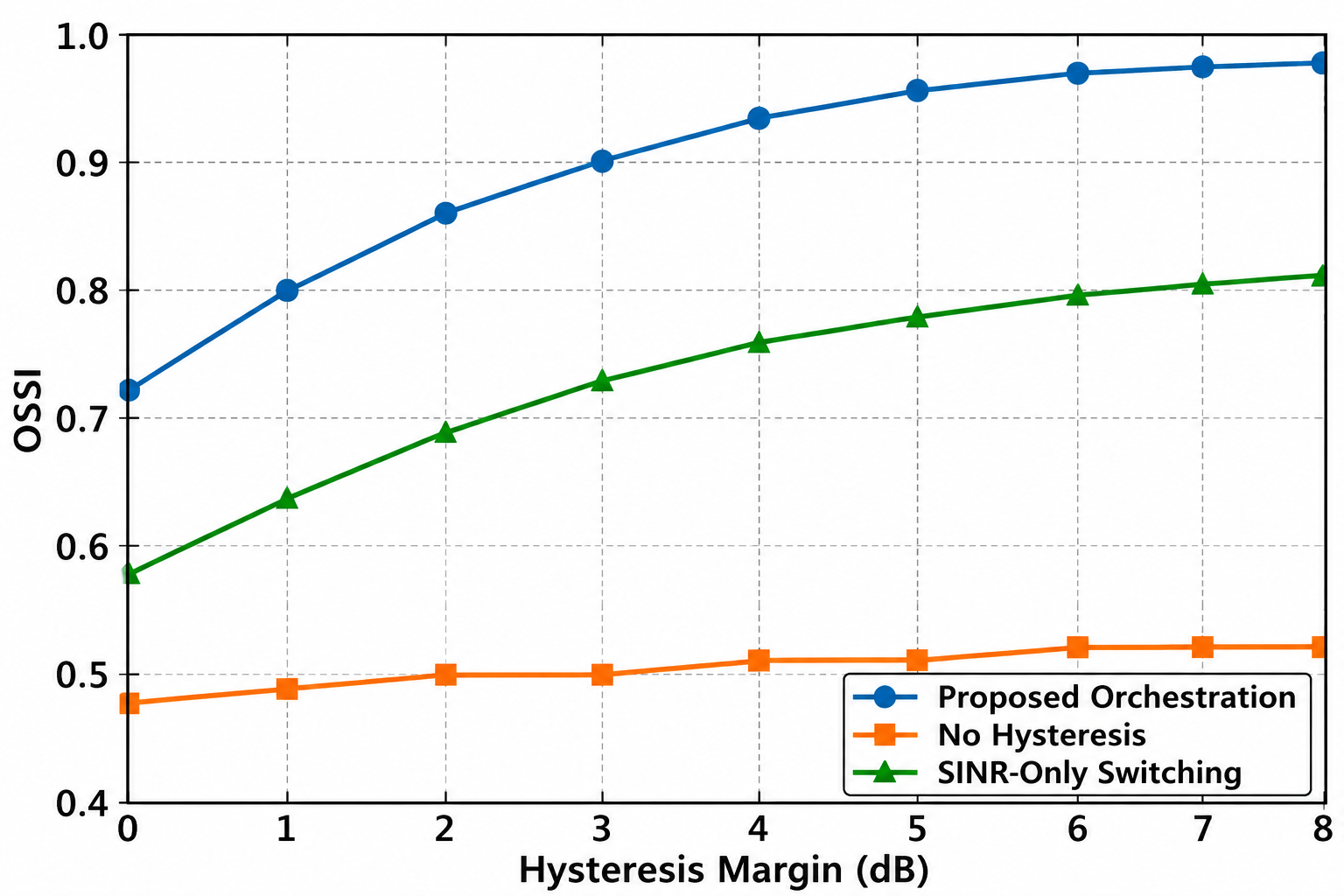}
\caption{OSSI versus hysteresis margin.}
\label{fig:ossi_hysteresis}
\vspace{-3mm}
\end{figure}
Fig.~\ref{fig:ossi_hysteresis} shows OSSI versus hysteresis margin,
i.e., the SINR threshold gap required before switching to a candidate
serving domain. As the hysteresis margin increases, fewer short-term
SINR fluctuations trigger serving-domain transitions, thereby reducing
$N_{\mathrm{unstable}}$ in eq.~\eqref{eq:ossi} and increasing OSSI.
This trend confirms that the hysteresis mechanism improves association
stability by filtering transient channel variations rather than
triggering unnecessary serving-domain changes. SA-MO-RSMA achieves the
highest OSSI because its service-aware selection mechanism jointly
accounts for communication quality, service continuity, and switching
stability. Consequently, it suppresses unnecessary domain transitions
more effectively than the benchmark schemes. In contrast, the
no-hysteresis and SINR-only baselines are more sensitive to short-term
SINR fluctuations and consequently exhibit lower switching stability.
\begin{figure}[t]
\centering
\includegraphics[width=0.82\columnwidth]{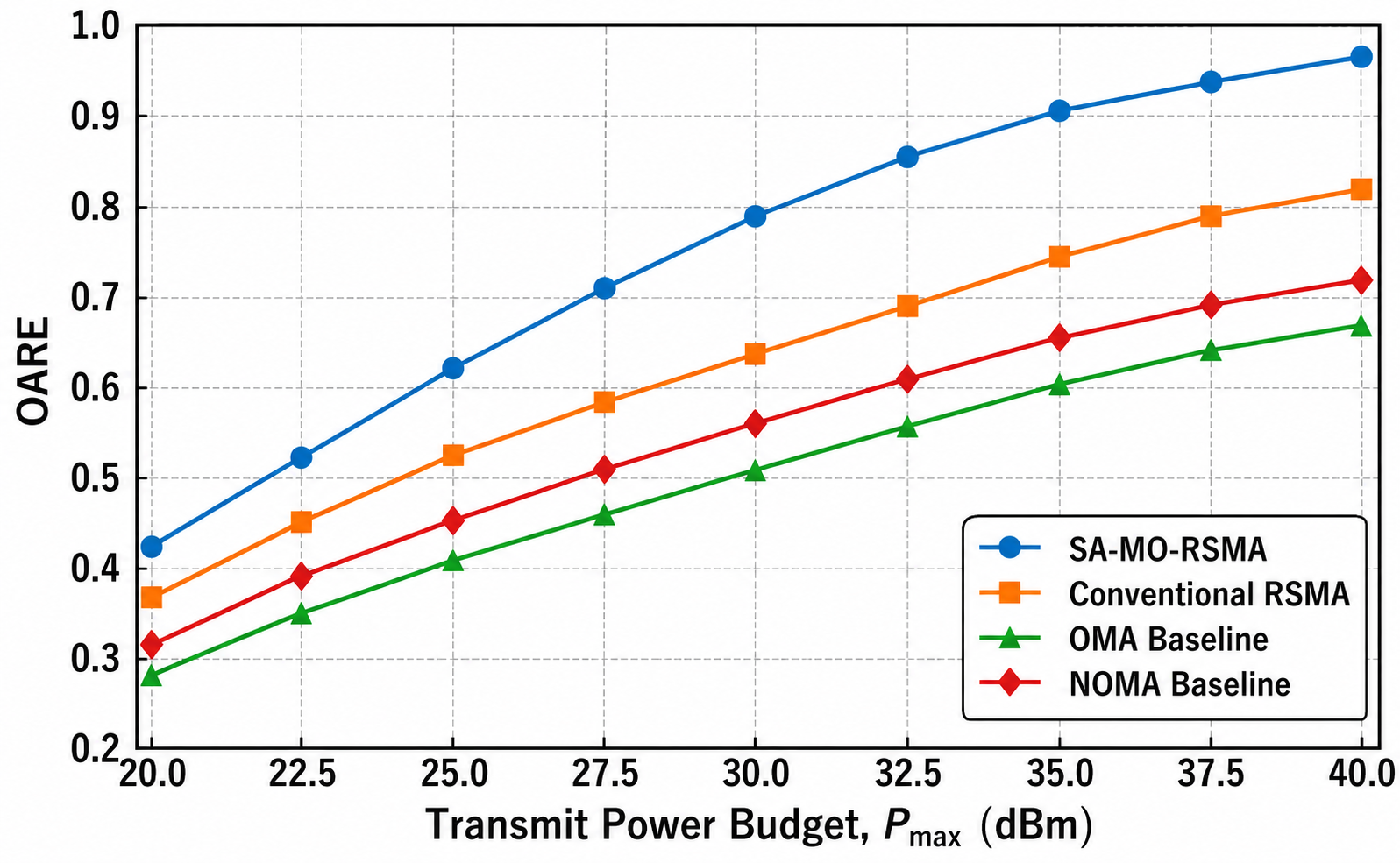}
\caption{OARE versus transmit-power budget.}
\label{fig:oare_power}
\vspace{-3mm}
\end{figure}

Fig.~\ref{fig:oare_power} shows OARE versus transmit-power budget.
Over the considered operating range, OARE increases for all schemes
because the achievable rate gains outweigh the corresponding increase
in transmit power, thereby improving the rate-to-power trade-off.
SA-MO-RSMA achieves the highest OARE by exploiting its global-common,
orbit-common, and user-private stream structure. Conventional RSMA
outperforms OMA and NOMA but remains below SA-MO-RSMA because it lacks
domain-specific common signaling. In SA-MO-RSMA, orbit-common streams
deliver shared information to users associated with the same domain,
while user-private streams carry user-specific data. This reduces
redundant transmission of shared information through private streams
and improves throughput per unit transmit power under heterogeneous
GEO, MEO, LEO, and terrestrial conditions.
\section{Conclusion}
\label{sec:conclusion}

This paper proposed SA-MO-RSMA for service-aware multi-orbit
orchestration in integrated 6G TN--NTN systems. The framework combines
hierarchical global-common, orbit-common, and user-private RSMA
transmission with service-intent-aware serving-domain selection and
OSUF-based orchestration. We formulated the joint serving-domain
selection and resource-allocation problem and introduced MOSCR, OSSI,
OARE, and the synchronization-quality metric $J$ to characterize
service continuity, switching stability, power-normalized transmission
efficiency, and synchronization robustness, respectively. Analytical
results established boundedness and solution-existence properties,
while numerical results demonstrated improved average per-user service
utility, continuity, switching stability, and transmission efficiency
over the considered benchmarks. Future work will investigate
distributed orchestration, learning-assisted adaptation, and alignment
with evolving 3GPP NR--NTN and ETSI ISG MAT activities.

\appendix
\section{Supporting Proofs}
\label{app:proofs}

\textit{Proof of Lemma 1:}
From eq.~\eqref{eq:orbit_utility}, the normalized components lie in
$[0,1]$, with $w_i\geq0$ and $\sum_{i=1}^{6}w_i=1$. Hence,
$0\leq\sum_{i=1}^{5}w_i\hat{X}_{i,u,o}\leq1-w_6$ and
$-w_6\leq-w_6\hat{L}_{u,o}\leq0$, yielding
$-w_6\leq\Psi_{u,o}\leq1-w_6$.

\textit{Proof of Proposition 1:}
For any user $u$, $\Psi_{u,o}$ is evaluated over the finite, nonempty
set $\mathcal O$. Hence, $\{\Psi_{u,o}:o\in\mathcal O\}$ contains a
maximum and at least one $o_u^\star$ exists; multiple maximizers may
exist.

\textit{Proof of Lemma 2:}
From eq.~\eqref{eq:osuf}, all normalized components lie in $[0,1]$,
with non-negative weights summing to unity. Hence, the benefit
contribution lies in $[0,1-w_L-w_O]$, while the cost contribution lies
in $[-(w_L+w_O),0]$. Therefore,
\[
-(w_L+w_O)\leq U_{u,o}^{\mathrm{OSUF}}
\leq1-w_L-w_O,
\]
which proves the boundedness of $U_{u,o}^{\mathrm{OSUF}}$.\\
\textit{Proof of Proposition 2:}
For fixed benefit components,
$\partial U_{u,o}^{\mathrm{OSUF}}/\partial\hat{L}_{u,o}=-w_L\leq0$
and
$\partial U_{u,o}^{\mathrm{OSUF}}/\partial\hat{O}_{u,o}=-w_O\leq0$.
Hence, increasing either cost cannot increase the OSUF.

\textit{Proof of Theorem 1:}
The association vector $\mathbf A$ belongs to the finite set
$\mathcal O^U$. For each feasible association,
$(\mathbf P,\mathbf\Theta)$ belongs to a compact feasible set under the
stated assumptions. Since $U_{\mathrm{net}}^{\mathrm{OSUF}}$ is
continuous, it attains a maximum for each feasible association.
Finiteness of $\mathcal O^U$ therefore guarantees at least one global
maximizer whenever the overall feasible set is nonempty.

\textit{Proof of Theorem 2:}
Consider $U_1<U_2$ under fixed resources and channel conditions. By
assumption, increasing user load does not increase the per-user benefit
components or decrease the corresponding costs. Since the benefit
weights are non-negative and the cost terms enter OSUF negatively,
$\bar{U}_{\mathrm{net}}^{\mathrm{OSUF}}(U_2)
\leq\bar{U}_{\mathrm{net}}^{\mathrm{OSUF}}(U_1)$.
Hence, the average per-user OSUF is non-increasing with user load.
\bibliographystyle{IEEEtran}
\bibliography{CSCN}

\end{document}